\documentclass[aps,prl,reprint,superscriptaddress,longbibliography]{revtex4-2}
\usepackage{CJK}
\usepackage[T1]{fontenc}
\usepackage{microtype}
\usepackage{amsmath,amssymb,bm,graphicx,tikz}
\usepackage{amsthm}
\theoremstyle{plain}
\newtheorem{theorem}{Theorem}
\newtheorem{corollary}{Corollary}
\theoremstyle{remark}
\newtheorem{note}{Note}
\usepackage[colorlinks=true,linkcolor=blue!50!black,citecolor=blue!50!black,urlcolor=blue!50!black]{hyperref}
\hypersetup{pdftitle={Exact Strong Symmetry from Two-Point Correlations},pdfsubject={Version 12: exact strong symmetry and limits on spontaneous order}}
\newcommand{\id}{\mathbb I}
\newcommand{\Tr}{\operatorname{Tr}}
\newcommand{\Var}{\operatorname{Var}}

\newcommand{\ket}[1]{\lvert #1\rangle}
\newcommand{\bra}[1]{\langle #1\rvert}
\newcommand{\norm}[1]{\lVert #1\rVert}
\newcommand{\SU}{\mathrm{SU}}
\newcommand{\Uone}{\mathrm U(1)}

\begin{document}
\begin{CJK*}{UTF8}{gbsn} 
\title{Strong Symmetry from Two-Point Correlations}
\date{September 19, 2026}
\author{Han Yan (闫寒)}
\email{hanyan@issp.u-tokyo.ac.jp}
\affiliation{Institute for Solid State Physics, The University of Tokyo, Kashiwa, Chiba 277-8581, Japan}

\begin{abstract}
Strong and weak symmetries and their spontaneous breaking distinguish different forms of symmetry and order in mixed states. We show that complete one- and two-point correlation functions, equivalently all two-site reduced density matrices, determine exact strong symmetry for continuous onsite unitary actions of connected groups: states with identical correlations have the same strong symmetry and character. 
We also present an algorithm to determine the maximal connected strong-symmetry sub-Lie group and its character from two-point correlations. 
Because such correlations are more accessible than full-state tomography, the algorithm is readily applicable to experiments ranging from condensed-matter systems to quantum simulators and circuits. We also identify group and representation-specific cases in which fewer correlations suffice.
\end{abstract}
\maketitle
\end{CJK*}

\section{Introduction}
\label{sec:intro}
Symmetry constrains the density matrix of a mixed state in two distinct ways. Weak symmetry leaves the ensemble invariant under conjugation, whereas strong symmetry requires every state in its support to transform with the same symmetry phase~\cite{LessaAnomaly2025}. Strong and weak symmetries also constrain Lindblad dynamics~\cite{Buca2012,Albert2014} and mixed-state symmetry-protected topological order~\cite{deGroot2022,MaTurzillo2025}. The distinction for states underlies symmetry-enforced entanglement in mixed singlet ensembles~\cite{Moharramipour2024,Li2025} and strong-to-weak spontaneous symmetry breaking, whose order is characterized by nonlinear correlation functions~\cite{Lessa2025,Sala2024}. Such ordering transitions have been studied in dissipative models~\cite{Gu2025} and decohered spin systems~\cite{Orito2025}.

In many-body experiments, full-state tomography rapidly becomes impractical, while overlapping tomography, classical shadows, and collective measurements can access large families of pair RDMs or two-point correlations~\cite{Cotler2020,Huang2020,Yang2023,Toth2007,Urizar2013,Vitagliano2025}. This restriction raises a theoretical question with practical relevance: assuming access to all one- and two-point correlations, or equivalently all two-site reduced density matrices (RDMs), can such low-order data determine whether the exact strong-symmetry constraint holds?

Two-site RDMs generally leave the global density matrix undetermined~\cite{Ocko2011,Walck2008}. For continuous onsite symmetries, however, total charges are additive, and their fluctuations involve only one- and two-point functions. We use this structure to show that identical pair RDMs imply the same exact strong symmetry and symmetry charge for any continuous onsite unitary action of a connected group. Thus low-order correlations fix a global property of the state without reconstructing it. For Lie groups, we further construct the connected strong-symmetry subgroups from the charge covariance matrix and discuss cases in which fewer correlations suffice. This identifies the exact strong-symmetry information contained in experimentally accessible low-order data.

\section{Theorems on strong symmetry from two-point correlations}
\label{sec:new_statements}

We consider a general quantum many-body setup: a finite system of $N\ge2$ sites with Hilbert space $\mathcal H=\bigotimes_{i=1}^N\mathcal H_i$, where each $\mathcal H_i$ is finite dimensional. Let $G$ be a connected topological group with an ordinary continuous onsite unitary representation
\begin{equation}
 U(g)=\bigotimes_{i=1}^N u_i(g),\qquad g\in G.
 \label{eq:new_onsite}
\end{equation}
Each $u_i$ is a continuous unitary representation on $\mathcal H_i$. A density matrix $\rho$ has strong $G$ symmetry if $U(g)\rho=\chi(g)\rho$ for every $g\in G$, where $\chi:G\to\mathrm U(1)$ is a continuous one-dimensional representation, or character.

States with identical two-point correlations must have the same exact continuous strong symmetry. The proofs of the theorem and corollary follow in the next section.

\begin{theorem}[Strong symmetries from two-point correlations]
\label{thm:connected}
Let $\rho$ and $\sigma$ be density matrices on the same Hilbert space $\mathcal H$, with the same onsite action $U(g)$. Suppose that their two-point correlation functions are identical:
\begin{equation}
 \Tr(\rho A_iB_j)=\Tr(\sigma A_iB_j)
 \quad\text{for every }i\ne j,
 \label{eq:new_correlations}
\end{equation}
for all single-site observables $A_i$ and $B_j$, including the identity. Then, for every continuous character $\chi$ of $G$,
\begin{equation}
 \begin{gathered}
 U(g)\rho=\chi(g)\rho\\
 \text{for all }g\in G
 \end{gathered}
 \quad\Longleftrightarrow\quad
 \begin{gathered}
 U(g)\sigma=\chi(g)\sigma\\
 \text{for all }g\in G.
 \end{gathered}
 \label{eq:new_strong_equivalence}
\end{equation}
Thus strong symmetry under this group action is either present in both states, with the same character, or absent in both. In particular, if no character satisfies the strong-symmetry condition for $\rho$, none does for $\sigma$, and conversely.

\end{theorem}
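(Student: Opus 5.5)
Our plan is to reduce strong symmetry to a statement about the charges of a single one-parameter subgroup at a time, and to express that statement entirely through first and second moments of additive charges.

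\textbf{Reduction to one-parameter subgroups.} Fix $g\in G$ and write $\chi(g)=e^{i\theta}$. For a density matrix $\rho$ and a unitary $V$, the condition $V\rho=e^{i\theta}\rho$ holds iff the support of $\rho$ lies in the $e^{i\theta}$ eigenspace of $V$. This is equivalent to $\Tr(\rho V)=e^{i\theta}$: since $\Tr(\rho V)=\sum_k p_k\bra{\psi_k}V\ket{\psi_k}$ is a convex combination of numbers of modulus at most $1$, it reaches the unimodular value $e^{i\theta}$ only if every $V\ket{\psi_k}=e^{i\theta}\ket{\psi_k}$.

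Next we pass to the Lie algebra. Because $G$ is connected and acts continuously through the compact images $u_i(G)\subset \mathrm U(\mathcal H_i)$, the relevant objects are the closed connected subgroups $K_i=\overline{u_i(G)}$ and the image of $G$ in $\prod_i \mathrm U(\mathcal H_i)$. This image is a connected subgroup of a compact Lie group, so its closure $H$ is a compact connected Lie group, and every element of $H$ lies on a one-parameter subgroup. By continuity, $U(g)\rho=\chi(g)\rho$ for all $g\in G$ is equivalent to $U(h)\rho=\tilde\chi(h)\rho$ for all $h\in H$, with $\tilde\chi$ the continuous extension of $\chi$ (which exists and is unique if strong symmetry holds).

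For a one-parameter subgroup $h(t)=\exp(tX)$ we have $U(h(t))=e^{itQ_X}$ with $Q_X=\sum_i q_{i,X}$ an additive Hermitian charge, and $\chi(h(t))=e^{i\lambda t}$. Strong symmetry along this subgroup, for all $t$, is then equivalent to $\rho$ being supported in the $\lambda$-eigenspace of $Q_X$. That in turn is equivalent to
\begin{equation*}
\Tr(\rho Q_X)=\lambda,\qquad \Tr(\rho Q_X^2)-\Tr(\rho Q_X)^2=0,
\end{equation*}
since a state supported in an eigenspace has zero variance, and conversely zero variance forces the support into a single eigenspace.

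\textbf{Moments from two-point data.} We expand
\begin{equation*}
\Tr(\rho Q_X^2)=\sum_i\Tr(\rho q_{i,X}^2)+\sum_{i\ne j}\Tr(\rho\, q_{i,X}q_{j,X}).
\end{equation*}
The single-site terms are covered by hypothesis~\eqref{eq:new_correlations} with $B_j=\id$, and this is where $N\ge2$ is used. The cross terms are exactly two-point functions. So $\rho$ and $\sigma$ have the same mean and the same variance of every $Q_X$. If $\rho$ is strongly symmetric with character $\chi$, then for every $X$ the state $\sigma$ has mean $\lambda_X$ and zero variance, so $\sigma$ is strongly symmetric along every one-parameter subgroup with the same exponent. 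Because one-parameter subgroups generate (indeed cover) the connected group $H$, we get $U(h)\sigma=\tilde\chi(h)\sigma$ on $H$, and hence on $G$. Exchanging the roles of $\rho$ and $\sigma$ gives the converse, and also the statement that neither state admits a character when one of them fails to.

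\textbf{Main obstacle.} The delicate step is the passage from a general connected topological group $G$ to one-parameter subgroups. $G$ itself need not be a Lie group, and its image need not be closed. I would handle this entirely through the closure $H$ described above, using two facts:
\begin{itemize}
\item the set of $h$ with $U(h)\rho\propto\rho$, together with the resulting phase map, is closed and continuous in $h$;
\item the exponential map of a compact connected Lie group is surjective.
\end{itemize}
A cleaner alternative that avoids any character extension is to show directly that $\Tr(\sigma U(g))=\chi(g)$ for each $g$: write $U(g)$ as a limit of $e^{iQ_X}$ for $X$ in the Lie algebra of $H$, and apply the mean and variance argument to that $Q_X$, then invoke the unimodular-trace criterion from the first step.
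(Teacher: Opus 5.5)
Your proposal is correct and follows essentially the paper's argument. The variance of each additive charge is fixed by one- and two-point functions, so its vanishing forces $\sigma$ into the same definite-charge eigenspace; the general connected case is handled by passing to the compact connected Lie group obtained as the closure of the image in $\prod_i \mathrm U(d_i)$, extending the character via $\Tr[V(h)\rho]$. The differences are cosmetic: you test every direction $X$ separately and use surjectivity of the exponential map, whereas the paper sums the squared fluctuations over a Lie-algebra basis into one positive operator $W_\chi$ and uses that the basis one-parameter subgroups generate the group.
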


\begin{corollary}[Two-site reduced density matrices]
\label{cor:pair_rdms}
For each pair of sites $i,j$, define the two-site reduced density matrices (RDMs)
\begin{equation}
 \rho_{ij}=\Tr_{\overline{ij}}\rho,\qquad
 \sigma_{ij}=\Tr_{\overline{ij}}\sigma,
 \label{eq:new_rdms}
\end{equation}
where $\overline{ij}$ denotes all sites other than $i$ and $j$. These operators describe the states restricted to $\mathcal H_i\otimes\mathcal H_j$. Equality of the complete two-point correlation functions, with the identity factors included as above, is equivalent to equality of all pair RDMs:
\begin{equation}
 Equation~\eqref{eq:new_correlations}
 \quad\Longleftrightarrow\quad
 \rho_{ij}=\sigma_{ij}\quad\text{for all }i<j.
 \label{eq:new_rdm_equivalence}
\end{equation}
Consequently, under the assumptions of Theorem~\ref{thm:connected}, identical pair RDMs imply the same strong symmetry and the same character.

\end{corollary}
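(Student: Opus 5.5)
The statement just above is Corollary~\ref{cor:pair_rdms}. Its substantive content is the equivalence~\eqref{eq:new_rdm_equivalence}; the final sentence will then follow from Theorem~\ref{thm:connected}. I would prove the equivalence by duality between operators and expectation values on $\mathcal H_i\otimes\mathcal H_j$, using the defining property of the partial trace: for any operator $X$ on $\mathcal H_i\otimes\mathcal H_j$,
\begin{equation*}
 \Tr(\rho\, X\otimes\id_{\overline{ij}})=\Tr_{ij}(\rho_{ij}X).
\end{equation*}
Here I read $A_iB_j$ in \eqref{eq:new_correlations} as $A_i\otimes B_j$ tensored with the identity on the remaining sites.

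For the direction ``$\Leftarrow$'', I would apply the identity above with $X=A_i\otimes B_j$. This gives
\begin{equation*}
 \Tr(\rho A_iB_j)=\Tr(\rho_{ij}A_i\otimes B_j)=\Tr(\sigma_{ij}A_i\otimes B_j)=\Tr(\sigma A_iB_j).
\end{equation*}
If $i>j$, the same argument works with $\rho_{ji}$, after reordering the tensor factors. Choosing $A_i=\id$ or $B_j=\id$ recovers the one-point functions, so these are covered automatically.

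For the direction ``$\Rightarrow$'', I would first extend \eqref{eq:new_correlations} to arbitrary operators. Any operator on $\mathcal H_i$ is a complex linear combination of Hermitian single-site observables, and the same holds on $\mathcal H_j$. Operators $A_i\otimes B_j$ with $A_i,B_j$ arbitrary then span all of $\mathrm{End}(\mathcal H_i\otimes\mathcal H_j)$, because $\mathrm{End}(\mathcal H_i)\otimes\mathrm{End}(\mathcal H_j)\cong\mathrm{End}(\mathcal H_i\otimes\mathcal H_j)$ in finite dimensions. Since the trace is linear in $X$, we obtain $\Tr\bigl((\rho_{ij}-\sigma_{ij})X\bigr)=0$ for every $X$. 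The Hilbert--Schmidt pairing is nondegenerate, so taking $X=(\rho_{ij}-\sigma_{ij})^\dagger$ gives $\rho_{ij}=\sigma_{ij}$.

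The last claim of the corollary is then immediate. Equal pair RDMs imply \eqref{eq:new_correlations}, and Theorem~\ref{thm:connected} gives \eqref{eq:new_strong_equivalence} for every character, so strong symmetry is present in both states with the same character or absent in both. The only step requiring care is the spanning argument in ``$\Rightarrow$''. It depends on two points: the identity must be included among the allowed observables, and the observables must be extended linearly from Hermitian to arbitrary operators. Both are supplied by the hypotheses, so I expect no real obstacle.
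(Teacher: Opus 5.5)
Your proof is correct and follows the paper's argument. The ``$\Leftarrow$'' direction uses the same partial-trace identity, and your spanning-plus-nondegeneracy argument for ``$\Rightarrow$'' is the basis-free version of the paper's explicit expansion of $\rho_{ij}$ in the orthogonal Hermitian product basis $T_i^\alpha\otimes T_j^\beta$ with $T_i^0=\id$.
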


\begin{note}[Strong discrete symmetries cannot be determined.]
\label{note:discrete}
Theorem~\ref{thm:connected} does not extend to discrete symmetry groups in general: identical pair RDMs need not preserve either strong symmetry or its character. The following counterexample illustrates this limitation. On $N\ge3$ qubits, let $Z_i=\ket0\bra0_i-\ket1\bra1_i$ and let the nontrivial element of $\mathbb Z_2$ act as $P=\bigotimes_iZ_i$. Consider
\begin{equation}
 \rho_+=\frac{\id+P}{2^N},\qquad
 \rho_-=\frac{\id-P}{2^N},\qquad
 \tau=\frac{\id}{2^N}.
 \label{eq:new_parity_states}
\end{equation}
These are density matrices since $P^2=\id$ and $\Tr P=0$. Tracing out any site removes the parity term, since $\Tr Z_i=0$. Every proper subset $R$ therefore has
\begin{equation}
 (\rho_+)_R=(\rho_-)_R=\tau_R=\frac{\id_R}{2^{|R|}}.
 \label{eq:new_parity_rdms}
\end{equation}
All pair RDMs therefore coincide, yet
\begin{equation}
 P\rho_+=\rho_+,\qquad P\rho_-=-\rho_-,\qquad
 P\tau\ne\pm\tau.
 \label{eq:new_parity_symmetry}
\end{equation}
The states $\rho_\pm$ carry opposite strong characters, whereas $\tau$ has only weak $\mathbb Z_2$ symmetry, $P\tau P^\dagger=\tau$.
\end{note}

\begin{note}[Weak symmetry]
\label{note:weak}
Identical pair RDMs need not preserve weak symmetry, even for a connected group. On $N\ge3$ qubits, take $U(\theta)=e^{i\theta Q}$ with $Q=\sum_i\ket1\bra1_i$. Writing $\ket{0^N}=\ket0^{\otimes N}$ and $\ket{1^N}=\ket1^{\otimes N}$, consider
\begin{equation}
 \begin{aligned}
 \ket{\psi}&=\frac{\ket{0^N}+\ket{1^N}}{\sqrt2},\qquad
 \rho=\ket\psi\bra\psi,\\
 \sigma&=\tfrac12\bigl(\ket{0^N}\bra{0^N}+\ket{1^N}\bra{1^N}\bigr).
 \end{aligned}
 \label{eq:weak_counterexample}
\end{equation}
Tracing out any site removes the coherence between $\ket{0^N}$ and $\ket{1^N}$, so all pair RDMs coincide. However, $[Q,\sigma]=0$ while $[Q,\rho]\ne0$: only $\sigma$ has weak $\Uone$ symmetry. Both states lack strong $\Uone$ symmetry because their charge fluctuates. Thus Theorem~\ref{thm:connected} fixes the absence of strong symmetry while leaving weak symmetry undetermined.
\end{note}

\section{Proofs of Theorem~\ref{thm:connected} and Corollary~\ref{cor:pair_rdms}}
\label{sec:new_proofs}

\begin{proof}[Proof of Theorem~\ref{thm:connected}]
The essential fact is that a definite total charge has no fluctuations, while the squared fluctuation of an additive charge contains only single-site and two-site terms. We first prove the result for a connected Lie group and then extend it to the general connected topological group in the statement.

Choose a basis $X_a$ of the Lie algebra. The onsite representation gives additive Hermitian generators $Q_a=\sum_iq_i^a$, where $q_i^a$ acts only on site $i$. For a fixed continuous character $\chi$, write
\begin{equation}
 U(e^{tX_a})=e^{itQ_a},\qquad
 \chi(e^{tX_a})=e^{itc_a},
 \label{eq:new_generators}
\end{equation}
with real $c_a$. If $\rho$ has strong symmetry with character $\chi$, differentiating at $t=0$ gives $Q_a\rho=c_a\rho$ for every $a$. Hence the positive operator
\begin{equation}
 W_\chi=\sum_a(Q_a-c_a\id)^2
 \label{eq:new_fluctuation}
\end{equation}
has zero expectation in $\rho$. Its expectation is determined by the assumed correlations, because
\begin{equation}
 Q_a^2=\sum_i(q_i^a)^2+2\sum_{i<j}q_i^aq_j^a.
 \label{eq:new_charge_square}
\end{equation}
The first sum is fixed by one-point functions and the second by two-point functions. The linear terms in $W_\chi$ are also fixed by one-point functions. It follows that $\Tr(\sigma W_\chi)=\Tr(\rho W_\chi)=0$.

Positivity turns this vanishing expectation into an exact charge constraint. Indeed,
\begin{equation}
 0=\Tr(\sigma W_\chi)
   =\sum_a\norm{(Q_a-c_a\id)\sqrt\sigma}_2^2,
 \label{eq:new_positive_sum}
\end{equation}
where $\norm A_2^2=\Tr(A^\dagger A)$. Every term must vanish, giving $Q_a\sigma=c_a\sigma$. Exponentiation yields $U(e^{tX_a})\sigma=\chi(e^{tX_a})\sigma$. The one-parameter subgroups associated with a Lie-algebra basis generate the connected group, so $U(g)\sigma=\chi(g)\sigma$ for every $g\in G$. Exchanging $\rho$ and $\sigma$ proves equivalence for each character and shows that either both states admit a strong character or neither does.

The conclusion also holds for connected $G$ not assumed to be a Lie group; only its finite-dimensional local action matters. With $d_i=\dim\mathcal H_i$, define
\begin{equation}
 K=\overline{\{(u_1(g),\ldots,u_N(g)):g\in G\}}
 \subseteq\prod_i\mathrm U(d_i).
 \label{eq:new_image_closure}
\end{equation}
The continuous image of a connected group and its closure are connected. As a closed subgroup of a finite product of unitary groups, $K$ is therefore a connected compact Lie group. It acts onsite through $V(k)=\bigotimes_i k_i$. If $U(g)\rho=\chi(g)\rho$, the function $\widetilde\chi(k)=\Tr[V(k)\rho]$ extends $\chi$ from the image of $G$ to $K$. By continuity, $V(k)\rho=\widetilde\chi(k)\rho$, and $\widetilde\chi$ is a multiplicative unit-modulus function, hence a continuous character. Applying the Lie-group result to $K$ gives $V(k)\sigma=\widetilde\chi(k)\sigma$. Restricting back to the image of $G$ gives the required equation with the original character $\chi$. The reverse direction follows by exchanging the two states.

\end{proof}

\begin{proof}[Proof of Corollary~\ref{cor:pair_rdms}]
A pair RDM is fully specified by the expectation values of products of single-site observables. To make this explicit, choose a Hermitian operator basis $\{T_i^\alpha\}_{\alpha=0}^{d_i^2-1}$ on each site, normalized by $T_i^0=\id_i$ and $\Tr(T_i^\alpha T_i^\beta)=d_i\delta_{\alpha\beta}$. The product operators form an orthogonal basis on the pair, so
\begin{equation}
 \rho_{ij}=\frac1{d_id_j}\sum_{\alpha,\beta}
 \Tr(\rho T_i^\alpha T_j^\beta)\,
 T_i^\alpha\otimes T_j^\beta,
 \label{eq:new_pair_expansion}
\end{equation}
and likewise for $\sigma_{ij}$. Equal two-point functions give equal coefficients in this expansion and therefore equal pair RDMs. Conversely, the definition of the partial trace gives
\begin{equation}
 \Tr(\rho A_iB_j)
 =\Tr[\rho_{ij}(A_i\otimes B_j)].
 \label{eq:new_partial_trace}
\end{equation}
Thus equal pair RDMs give equal two-point functions, including the one-point functions obtained with an identity factor. This proves Eq.~\eqref{eq:new_rdm_equivalence}; the strong-symmetry conclusion then follows from Theorem~\ref{thm:connected}.

\end{proof}

\section{Constructing strong-symmetry generators from two-point correlations}
\label{sec:constructive_generators}

For connected compact Lie groups, vanishing of the entire charge covariance matrix is known to be equivalent to strong symmetry~\cite{Kusuki2026}. We use its kernel to determine the maximal connected subgroup under which a state has strong symmetry. Related covariance-null-space methods infer parent Hamiltonians from correlation data~\cite{Qi2019,Chertkov2018,Petrovich2024} and local conservation laws from quantum dynamics~\cite{Zhan2024}; here the object determined is a symmetry of the state. The physical basis is that a definite total charge acts with one phase throughout the state's support.

Take a connected Lie group $G$ acting as in Eq.~\eqref{eq:new_onsite}, with Lie-algebra basis $X_a$ and additive Hermitian generators $Q_a=\sum_iq_i^a$. The state need not have strong $G$ symmetry; we seek all combinations of generators with definite charge.

The required charge fluctuations are obtained directly from the correlations. Write $\mu_a=\Tr(\rho Q_a)$ and define the real symmetric covariance matrix
\begin{equation}
\Gamma^{(\rho)}_{ab}
 =\tfrac12\Tr[\rho\{Q_a,Q_b\}]-\mu_a\mu_b.
 \label{eq:new_subgroup_covariance}
\end{equation}
This is the symmetrized generator covariance matrix used to characterize strong-symmetry breaking in Ref.~\cite{Kusuki2026}. Additivity expands $\{Q_a,Q_b\}$ into single-site anticommutators and products on distinct sites. Thus complete one- and two-point functions determine both $\Gamma^{(\rho)}$ and the means $\mu_a$.

To identify the subgroup determined by these fluctuations, collect all strong-symmetry operations into
\begin{equation}
 S_\rho=\{g\in G:U(g)\rho=z\rho
              \text{ for some }z\in\mathrm U(1)\}.
 \label{eq:new_strong_subgroup}
\end{equation}
Its identity component $H_\rho=S_\rho^0$ is the maximal connected strong-symmetry subgroup. We show that its generators are precisely the zero directions of $\Gamma^{(\rho)}$.

\begin{theorem}[Constructive characterization]
\label{thm:constructive_generators}
With the definitions above, write $X(v)=\sum_av_aX_a$ for real $v$. The Lie algebra of $H_\rho$ and the subgroup itself are
\begin{align}
 \mathfrak h_\rho&=\{X(v):v\in\ker\Gamma^{(\rho)}\},
 \label{eq:new_kernel_algebra}\\
 H_\rho&=\bigl\langle\exp_G X(v):v\in\ker\Gamma^{(\rho)}\bigr\rangle,
 \label{eq:new_kernel_group}
\end{align}
where angle brackets denote the subgroup generated by the indicated elements. The strong character on $H_\rho$ is fixed by
\begin{equation}
 \chi_\rho\!\left(\exp_G[tX(v)]\right)
 =e^{it\sum_av_a\mu_a},\qquad v\in\ker\Gamma^{(\rho)}.
 \label{eq:new_kernel_character}
\end{equation}
Every connected subgroup of $G$ under which $\rho$ has strong symmetry is contained in $H_\rho$. Consequently, complete one- and two-point correlations determine this maximal connected subgroup and its character.
\end{theorem}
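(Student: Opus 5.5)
The plan is to identify the kernel of $\Gamma^{(\rho)}$ with the set of generators that have definite charge on $\rho$, and then pass from generators to the subgroup using the standard correspondence between closed subgroups and Lie subalgebras.

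\textbf{Step 1: the kernel is a linear space of definite-charge generators.} For real $v$, set $Q(v)=\sum_a v_aQ_a$ and $\mu(v)=\sum_av_a\mu_a$. Expanding the covariance gives
\begin{equation}
 v^{\mathsf T}\Gamma^{(\rho)}v=\Tr\bigl[\rho\,(Q(v)-\mu(v)\id)^2\bigr]=\norm{(Q(v)-\mu(v)\id)\sqrt\rho}_2^2 .
\end{equation}
Thus $\Gamma^{(\rho)}$ is positive semidefinite. As in the proof of Theorem~\ref{thm:connected}, this yields the equivalence
\begin{equation}
 v\in\ker\Gamma^{(\rho)}\iff Q(v)\rho=\mu(v)\rho .
\end{equation}
The reverse implication is direct. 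If $Q(v)\rho=c\rho$ for some real $c$, then taking the trace forces $c=\mu(v)$. Define $\mathfrak k=\{X(v):v\in\ker\Gamma^{(\rho)}\}$.

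\textbf{Step 2: $\mathfrak k$ is a Lie subalgebra.} The onsite representation has differential $d U(X(v))=iQ(v)$, so commutators of the $Q$'s represent Lie brackets. Take $v,w\in\ker\Gamma^{(\rho)}$. Then
\begin{equation}
 [Q(v),Q(w)]\rho=Q(v)\mu(w)\rho-Q(w)\mu(v)\rho=0 .
\end{equation}
The bracket $[X(v),X(w)]$ equals $X(u)$ for some real $u$, with $-i[Q(v),Q(w)]=Q(u)$. Hence $Q(u)\rho=0$, so $u\in\ker\Gamma^{(\rho)}$, and $\mathfrak k$ is closed under the bracket.

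\textbf{Step 3: $\mathfrak k$ is the Lie algebra of $S_\rho$.} First observe that $S_\rho$ is a closed subgroup of $G$. Indeed, $z=\Tr[U(g)\rho]$, the condition $U(g)\rho=z\rho$ is preserved under limits, and $S_\rho$ is closed under products and inverses. By Cartan's theorem, $S_\rho$ is therefore an embedded Lie subgroup. Its identity component $H_\rho$ has Lie algebra
\begin{equation}
 \{X:\exp(tX)\in S_\rho\ \text{for all }t\} .
\end{equation}
For $X=X(v)$ the condition reads $e^{itQ(v)}\rho=z(t)\rho$ for all $t$. Differentiating at $t=0$ gives $Q(v)\rho=z'(0)/i\,\rho$, a definite charge. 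Conversely, a definite charge exponentiates to $e^{itQ(v)}\rho=e^{it\mu(v)}\rho$. Together with Step 1 this gives \eqref{eq:new_kernel_algebra}, with the explicit phase \eqref{eq:new_kernel_character}. The description \eqref{eq:new_kernel_group} then follows because a connected Lie group is generated by the exponentials of its Lie algebra.

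\textbf{Step 4: maximality, the character, and data dependence.} On $S_\rho$ the map $g\mapsto\Tr[U(g)\rho]$ is a continuous character. Its restriction to $H_\rho$ is determined by its values on one-parameter subgroups, and those values are given by \eqref{eq:new_kernel_character}. Now let $H'$ be any connected subgroup of $G$ under which $\rho$ is strongly symmetric. Then $H'\subseteq S_\rho$, and since $H'$ is connected and contains the identity, $H'\subseteq S_\rho^0=H_\rho$. Finally, $\Gamma^{(\rho)}$ and the means $\mu_a$ are expressible through one- and two-point functions, as already noted after \eqref{eq:new_subgroup_covariance}. Hence these correlations determine $H_\rho$ and $\chi_\rho$.

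The main obstacle is Step 3. It requires care about whether $H'$ is only an immersed (non-closed) subgroup and whether $S_\rho$ is genuinely closed. I plan to handle both through the closedness argument above, and to avoid any dependence on $H'$ being closed by using only the containment $H'\subseteq S_\rho$ followed by connectedness.
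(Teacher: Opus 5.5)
Your proposal is correct and follows essentially the same route as the paper: the variance identity identifies $\ker\Gamma^{(\rho)}$ with the definite-charge generators, closedness of $S_\rho$ via $z(g)=\Tr[U(g)\rho]$ and Cartan's theorem identify its Lie algebra, and maximality and the character follow from connectedness and one-parameter subgroups. One small slip, harmless because $[Q(v),Q(w)]\rho=0$ under either sign and because Step 2 already follows from Step 3: with $dU(X)=iQ$, the bracket gives $Q(u)=i[Q(v),Q(w)]$, not $-i[Q(v),Q(w)]$.
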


\begin{proof}
First, the covariance kernel identifies exactly the charges that have a definite value throughout the state. For $Q(v)=\sum_av_aQ_a$ and $q(v)=\sum_av_a\mu_a$,
\begin{equation}
 v^{\mathsf T}\Gamma^{(\rho)}v
 =\Var_\rho[Q(v)]
 =\norm{[Q(v)-q(v)\id]\sqrt\rho}_2^2.
 \label{eq:new_subgroup_variance}
\end{equation}
The matrix is therefore positive semidefinite. Its quadratic form vanishes exactly on its kernel, while the squared norm vanishes exactly when $Q(v)$ acts as $q(v)$ on the support of $\rho$. Hence
\begin{equation}
 v\in\ker\Gamma^{(\rho)}
 \quad\Longleftrightarrow\quad
 Q(v)\rho=q(v)\rho.
 \label{eq:new_kernel_charge}
\end{equation}
Physically, every measurement of this total charge gives $q(v)$, although local charges may fluctuate. Equation~\eqref{eq:new_kernel_charge} is the direction-resolved form of the faithful zero-covariance criterion~\cite{Kusuki2026}.

A definite charge generates a continuous strong symmetry. Exponentiating Eq.~\eqref{eq:new_kernel_charge} gives
\begin{equation}
 U\!\left(\exp_G[tX(v)]\right)\rho=e^{itq(v)}\rho
 \quad\text{for all }t\in\mathbb R.
 \label{eq:new_kernel_flow}
\end{equation}
Conversely, differentiating the strong-symmetry condition along any one-parameter subgroup at $t=0$ gives the same definite-charge condition, with its charge fixed by taking the trace. Thus the kernel identifies every continuous strong-symmetry generator within the specified action. These generators close under commutators: if $A\rho=a\rho$ and $B\rho=b\rho$, then $i[A,B]\rho=0$.

These infinitesimal generators determine the entire connected subgroup. The set $S_\rho$ is a subgroup because the strong-symmetry condition is preserved under products and inverses. Its phase is uniquely given by $z(g)=\Tr[U(g)\rho]$; continuity then shows that the condition is also preserved under limits. Hence $S_\rho$ is closed and is a Lie subgroup of $G$. The equivalence just established identifies its Lie algebra as Eq.~\eqref{eq:new_kernel_algebra}. Its identity component is generated by its one-parameter subgroups, proving Eq.~\eqref{eq:new_kernel_group}. Any connected subgroup under which $\rho$ has strong symmetry lies in $S_\rho$ and contains the identity, so it lies in $H_\rho$. This proves maximality.

Finally, once the subgroup is known, the charge values fix its character. The phases multiply under group multiplication, so $z(g)$ restricts to a continuous character $\chi_\rho$ on $H_\rho$. Equation~\eqref{eq:new_kernel_flow} gives Eq.~\eqref{eq:new_kernel_character}. Since the one-parameter subgroups generate $H_\rho$, these phases uniquely determine the character on the whole connected subgroup. Both the covariance matrix and the charge means are fixed by the complete one- and two-point correlations, completing the construction.
\end{proof}

Identical correlations fix $\Gamma$ and $\mu$ and hence imply $H_\rho=H_\sigma$ with the same character, even when strong symmetry under the full group is absent. The kernel fixes the subgroup; the means fix its character. Disconnected components remain undetermined. For general connected topological groups, restricting Theorem~\ref{thm:connected} to each connected subgroup still gives equivalence; the explicit covariance construction requires a Lie algebra.

\section{Cases in which fewer two-point correlations are needed}
\label{sec:new_examples}

The converse of Theorem~\ref{thm:connected} is false: states with the same strong symmetry need not have the same complete two-point correlations. Moreover, selected correlations may suffice for a specified symmetry and representation. The $\Uone$ and $\SU(d)$ examples illustrate both points and show that the required correlation data can sometimes be reduced.

\textit{$\Uone$: same symmetry, different correlations.---}
For $N$ qubits, let $n_i=\ket1\bra1_i$, $Q=\sum_i n_i$, and $U(\theta)=e^{i\theta Q}$. At fixed $0<q<N$, compare the infinite-temperature state within the $q$-particle sector with the pure Dicke state:
\begin{equation}
 \begin{aligned}
 \rho_{\mathrm{mix}}&=\binom Nq^{-1}
   \sum_{|\boldsymbol n|=q}\ket{\boldsymbol n}\bra{\boldsymbol n},\\
 \rho_D&=\ket{D_q}\bra{D_q},\qquad
 \ket{D_q}=\binom Nq^{-1/2}
   \sum_{|\boldsymbol n|=q}\ket{\boldsymbol n}.
 \end{aligned}
 \label{eq:new_u1_states}
\end{equation}
Both have
\begin{equation}
 \langle n_i\rangle=\frac qN,\qquad
 \langle n_in_j\rangle=\frac{q(q-1)}{N(N-1)}\quad(i\ne j).
 \label{eq:new_u1_density_correlations}
\end{equation}
These relations give $\langle Q\rangle=q$ and $\langle Q^2\rangle=q^2$. Hence $\Var(Q)=0$, so positivity implies $Q\rho=q\rho$ and strong $\Uone$ symmetry with character $e^{iq\theta}$. Thus the occupation correlations in Eq.~\eqref{eq:new_u1_density_correlations}, rather than all two-point correlations, already establish the symmetry.

Strong symmetry does not fix phase coherence. With $S_i^+=\ket1\bra0_i$ and $i\ne j$,
\begin{equation}
 \langle S_i^+S_j^-\rangle_{\rho_{\mathrm{mix}}}=0,\qquad
 \langle S_i^+S_j^-\rangle_{\rho_D}
 =\frac{q(N-q)}{N(N-1)}.
 \label{eq:new_u1_coherence}
\end{equation}
The Dicke value remains finite at fixed filling, whereas the mixture has no phase coherence. The two states therefore share the same strong symmetry, character, and sufficient occupation correlations, but have different complete two-point correlations and pair RDMs. This is the converse failure of Theorem~\ref{thm:connected} and Corollary~\ref{cor:pair_rdms}.

\textit{Fundamental $\SU(d)$: one unmeasured pair allowed.---}
For $\mathcal H_i=\mathbb C^d$ and $u_i(g)=g$, strong $\SU(d)$ symmetry means that the state's support lies in the singlet subspace, because the group has no nontrivial continuous character. We assume $N=dn$, so that singlets exist, and denote their projector by $P_0$. Exchange correlations provide a particularly economical test of this support.

\begin{figure}[t]
\centering
\begin{tikzpicture}[x=1cm,y=1cm,
site/.style={circle,fill=white,draw=black,line width=.5pt,minimum size=4.5mm,inner sep=0pt,font=\footnotesize},
measured/.style={draw=blue!65!black,line width=.8pt},
missing/.style={draw=red!65!black,dash pattern=on 3pt off 1.5pt,line width=1.1pt}]
\begin{scope}[shift={(0,0)}]
\node[font=\small] at (0,1.5) {(a) $\SU(2)$, $N=4$};
\foreach \i/\angle in {1/135,2/45,3/-45,4/-135} {\coordinate (a\i) at (\angle:1.05);}
\foreach \i/\j in {1/3,1/4,2/3,2/4,3/4} {\draw[measured] (a\i)--(a\j);}
\draw[missing] (a1)--(a2);
\foreach \i in {1,...,4} {\node[site] at (a\i) {\i};}
\node[font=\footnotesize,align=center] at (0,-1.5) {Singlet or nonsinglet\\Same five pair RDMs};
\end{scope}
\begin{scope}[shift={(4,0)}]
\node[font=\small] at (0,1.5) {(b) $\SU(3)$, $N=6$};
\foreach \i/\angle in {1/120,2/60,3/0,4/-60,5/-120,6/180} {\coordinate (b\i) at (\angle:1.05);}
\foreach \i/\j in {1/3,1/4,1/5,1/6,2/3,2/4,2/5,2/6,3/4,3/5,3/6,4/5,4/6,5/6} {\draw[measured] (b\i)--(b\j);}
\draw[missing] (b1)--(b2);
\foreach \i in {1,...,6} {\node[site] at (b\i) {\i};}
\node[font=\footnotesize,align=center] at (0,-1.5) {Singlet support enforced\\State need not be unique};
\end{scope}
\draw[measured] (-1.15,-2.15)--(-.45,-2.15);
\node[anchor=west,font=\footnotesize] at (-.25,-2.15) {Specified singlet exchange correlations};
\draw[missing] (-1.15,-2.55)--(-.45,-2.55);
\node[anchor=west,font=\footnotesize] at (-.25,-2.55) {One unmeasured pair in each panel};
\end{tikzpicture}
\caption{Selected two-site data can suffice to establish strong symmetry. Blue edges denote specified data; the red dashed edge marks the unmeasured pair. Edges are correlation constraints, not interactions. (a) For four fundamental $\SU(2)$ spins, the five specified pair RDMs allow either singlet or nonsinglet support. (b) For six fundamental $\SU(3)$ spins, the fourteen specified exchange expectations enforce singlet support without fixing the state.}
\label{fig:new_pairs}
\end{figure}

Collective squared charges also underlie singlet and spin-squeezing criteria~\cite{Urizar2013,Toth2007,Vitagliano2025}. Normalize the generators by $\Tr(t^at^b)=\delta_{ab}/2$ and write $Q^a=\sum_i t_i^a$. If $F_{ij}$ exchanges sites $i$ and $j$, completeness gives
\begin{equation}
 C=\sum_a(Q^a)^2
 =NC_f\id+\sum_{i<j}\left(F_{ij}-\frac{\id}{d}\right),
 \label{eq:new_sud_Casimir}
\end{equation}
where $C_f=(d^2-1)/(2d)$. Its kernel is the singlet space, so all exchanges matching a singlet reference force $\langle C\rangle=0$ and strong symmetry~\cite{Moharramipour2024}.

For $d\ge3$, any one pair $A$ can be left unmeasured. Let $B$ be its complement, $\bm Q_{A,B}=\sum_{i\in A,B}\bm t_i$, and $C_{A,B}=\bm Q_{A,B}^2$. The operator
\begin{equation}
 W_B=\bm Q_B\cdot(\bm Q_A+\bm Q_B)
     =\frac{C+C_B-C_A}{2}
 \label{eq:new_sud_WB}
\end{equation}
contains no exchange within $A$ and is therefore fixed by the remaining exchange correlations. The $\SU(d)$ Casimir spectrum gives a nonsinglet gap $C\ge d$, together with $C_A^{\max}=(d-1)(d+2)/d$ and $C_B^{\min}=(d-2)(d+1)/d$~\cite{Fulton1991}. Since these operators commute,
\begin{equation}
 W_BP_0=0,\qquad
 W_B\big|_{P_0^\perp}\ge
 \frac{d+C_B^{\min}-C_A^{\max}}2=\frac{d-2}{2}.
 \label{eq:new_sud_omitbound}
\end{equation}
Agreement with the measured exchanges of a singlet reference gives $\langle W_B\rangle=0$ and, for $d\ge3$, forces singlet support. Thus the other exchanges establish strong symmetry without the pair $A$, as illustrated for $\SU(3)$ in Fig.~\ref{fig:new_pairs}(b). In a singlet, $\sum_{j\ne i}\langle F_{ij}\rangle=N/d-d$, so the omitted exchange is also recovered; an invariant fundamental pair state is fixed by this value~\cite{Werner1989}.

The restriction $d\ge3$ is essential. For fundamental $\SU(2)$ on even $N\ge4$, let the omitted pair be $A=(i,j)$, choose a singlet state $\omega_B$ on the other sites, and define $\ket s=(\ket{01}-\ket{10})/\sqrt2$. Then
\begin{equation}
 \sigma=(\ket{s}\bra{s})_A\otimes\omega_B,\qquad
 \rho=\frac{\id_A-(\ket{s}\bra{s})_A}{3}\otimes\omega_B,
 \label{eq:new_sud_triplet}
\end{equation}
share every measured pair RDM, because the singlet and mixed triplet have the same one-site RDM $\id/2$. Only $\sigma$ has singlet support; $\rho$ is weakly invariant but not strongly symmetric. Figure~\ref{fig:new_pairs}(a) shows that no pair can be omitted universally for $\SU(2)$. The reduction for $d\ge3$ is representation-specific and assumes agreement with singlet exchange correlations.

\section{Limits for spontaneous order}
\label{sec:new_spontaneous_order}

Pair RDMs cannot determine every property of the global density matrix~\cite{Feng2025,Liu2026,Divi2026}.

One such aspect is spontaneous order, which is an asymptotic property of a sequence rather than an exact symmetry condition on a single finite state. Consider exactly strongly symmetric states $\rho_L$ on growing lattices and bounded charged operators $O_i$ of fixed support, with both the system volume and operator separation tending to infinity. The ordinary and global fidelity correlators are~\cite{Lessa2025,Wang2026}
\begin{equation}
 C_{ij}=\Tr(\rho_L O_iO_j^\dagger),\qquad
 \mathcal F_{ij}=\norm{\sqrt{\rho_L}\,O_iO_j^\dagger\sqrt{\rho_L}}_1.
 \label{eq:new_order_correlators}
\end{equation}
Here $\norm{\cdot}_1$ is the trace norm. Nondecaying $C_{ij}$ signals conventional weak-symmetry-breaking order, whereas nondecaying $\mathcal F_{ij}$ diagnoses strong-symmetry-breaking order in the global fidelity formulation. If the corresponding ordinary charged correlations decay, the latter signals strong-to-weak spontaneous symmetry breaking. These asymptotic orders can coexist with $U(g)\rho_L=\chi_L(g)\rho_L$ at every finite $L$. Related canonical-purification diagnostics use R\'enyi-1 and Wightman correlations~\cite{Weinstein2025,LiuWightman2025}.

Equal pair RDMs fix $C_{ij}$ for single-site $O_i$ and $O_j$, but need not fix composite ordinary order or the nonlinear $\mathcal F_{ij}$. This failure occurs even at sharp charge. For $N=2^m-1$ qubits, label the sites by nonzero $x\in\mathbb F_2^m$ and define binary linear patterns $c_a(x)=a\cdot x$ for nonzero $a$~\cite{Cadambe2015}. Compare their uniform mixture $\rho_{\mathrm{pat}}$ with the uniform mixture $\rho_{\mathrm{can}}$ of all configurations with charge $q=(N+1)/2$. Counting gives both ensembles the same pair RDMs and charge $q$, while distinct patterns differ on $q$ sites. Thus, for $O_i=\ket1\bra0_i$,
\begin{equation}
 \mathcal F_{ij}(\rho_{\mathrm{pat}})=0,\qquad
 \mathcal F_{ij}(\rho_{\mathrm{can}})=\frac{N+1}{4N}\longrightarrow\frac14.
 \label{eq:new_fidelity_separation}
\end{equation}
Thus exact strong symmetry and every ordinary correlation of single-site operators can agree while global fidelity order differs.

\section{Summary and discussion}
\label{sec:new_summary}

We have shown that complete two-point correlations determine exact connected onsite strong symmetry and its character without reconstructing the many-body state. Theoretically, this reduces a global density-matrix constraint to vanishing fluctuations of additive charges. Experimentally, with exact data, it extracts global-symmetry information from local observables and two-point correlations accessible in many-body systems, quantum simulators, and circuits. For Lie groups, the charge covariance kernel identifies all strong-symmetry generators. The fixed-number $\Uone$ and fundamental $\SU(d)$ examples further show that representation-specific subsets can suffice.

Future work could identify reduced sets of correlations for other onsite representations and determine what additional information is needed to diagnose strong-to-weak symmetry breaking in local many-body systems. An immediate experimental direction is to apply this construction to infer strong-symmetry groups and determine how the exact equalities used here should be replaced by finite-precision bounds.

\section{Acknowledgments}
H.Y. acknowledges support from the Japan Society for the Promotion of Science through the Grant-in-Aid for Early-Career Scientists (Grant No. JP26K17090).

\bibliography{references}
\end{document}